\newtheorem{definition}{Definition}
\newtheorem{theorem}{Theorem}
\DeclareMathOperator*{\argmax}{arg\,max}
\newcommand*{\permcomb}[4][0mu]{{{}^{#3}\mkern#1#2_{#4}}}
\newcommand*{\perm}[1][-3mu]{\permcomb[#1]{P}}
\newcommand*{\rom}[1]{\expandafter\@slowromancap\romannumeral #1@}
\begin{document}

\title{Game Theoretic Semi-Distributed D2D Resource Allocation Underlaying an LTE Network}

\author{Anushree Neogi\thanks{This work was done when the author was completing her PhD at Indian Institute of Technology Bombay, Mumbai, Maharashtra, 400076, India (e-mail: anushree.iitg@gmail.com). }}

\fancyhead[L]{This paper is a preprint of a paper submitted to IET Networks. If accepted, the copy of record will be available at the IET Digital Library.}
\renewcommand{\headrulewidth}{0pt}

\maketitle
\thispagestyle{fancy}

\begin{abstract}
To devise D2D resource allocation algorithms in underlay D2D communications the channel state information (CSI) between the D2D transmitters and the BS and the D2D receiver CSI (DCSI-R) needs to be transmitted to the BS. However, this increases the control overhead and power wastage which increases with a fast fading channel since the CSI needs to be transmitted in every time slot. Most of the existing works assume DCSI-R availability at the BS. However, a few works assume its unavailability and determine the Nash equilibrium which may not be Pareto optimal. We address this problem and within a game theoretic framework propose a suboptimal semi-distributed D2D resource allocation algorithm. We consider the channel to exhibit path loss. Our goal is to maximize the social utility of the D2D users while meeting their utility requirements and the signal-to-interference-plus-noise ratio (SINR) requirements of the CUs to reach a Pareto optimal solution. Next, we consider shadowing, fast fading and mobility of CUs and propose another algorithm which is a modification of our first proposed algorithm. Through simulations we observe that the first algorithm does not perform well practically but the second algorithm is very robust to channel randomness and CU mobility.
\end{abstract}

\begin{IEEEkeywords}
Device-to-device (D2D) communication, resource allocation, game theory, social utility, Pareto optimal.
\end{IEEEkeywords}

\section{Introduction}

\subsection{Background}
 Because of the rapid increase in multimedia applications which require higher data rates, the fifth generation (5G) wireless network is being designed to meet this demand. Other requirements of the 5G network are low transmission delay, low energy consumption, high spectrum efficiency and high system capacity. Device-to-device (D2D) communication is therefore a promising component of 5G wireless networks. D2D communications are short range direct communications between D2D users, without the data passing through the base station (BS). This increases the system throughput and reduces the transmission delay and energy consumption of the network.  In D2D communications underlaying a Long Term Evolution (LTE) network, the D2D users are assumed to reuse the resources allocated to the cellular users (CUs)  with the aid of efficient resource allocation algorithms for the D2D users which increases the spectrum efficiency and the system capacity.  In an LTE network, resources are blocks of time and frequency called resource blocks (RBs). The CUs are treated as primary users in the D2D resource allocation algorithms which implies that the CUs' quality of service  (QoS) should not get affected by the interference from the D2D users' communications. Besides, the QoS requirements of the  D2D users need to be guaranteed too. Hence, interference management between the CUs and the D2D users is a technical challenge while designing resource allocation algorithms for the D2D users. Applications of D2D communications for 5G are local data, voice and video transmissions, emergency communications in disaster management, extension of network coverage via relays and  cellular vehicle-to-everything (C-V2X) communications for road safety and regulating traffic \cite{a0}.\\
\indent Currently, at least 13 C-V2X vehicle models have been launched in China \cite{a1}. C-V2X technology was first finalized by the 3GPP in Release 14 of its standards. It allows direct communications in the sub-6 GHz band (5.9 GHz) that has been dedicated  by governments around the world for Intelligent Transport System (ITS). C-V2X includes four types of communications: vehicle-to-vehicle (V2V), vehicle-to-infrastructure (V2I), vehicle-to-pedestrian (V2P) and vehicle-to-network (V2N) which are enabled by a new interface called the PC5 interface, also known as the sidelink \cite{a2}.  The conventional Uu interface enables V2N communications in which a vehicle, road side unit (RSU) or pedestrian can receive information from the network about the road conditions and traffic. The spectrum resources are dedicated for C-V2X communications and are not reused because  C-V2X communications need to be highly reliable and must not be subject to interference.\\
\indent However, the implementation and deployment of D2D communications in the LTE  underlaid operation has been in a prefatory state over the years \cite{a3}. For 6G, D2D applications are support for machine-to-machine (M2M) communications and performance improvement of unmanned aerial vehicles (UAVs) or drones in emergency situations, military operations and remote sensing \cite{a4}. 

\subsection{Motivation}
\indent Most of the previous works on D2D resource allocation underlaying an LTE network assume that the BS as a central resource allocation unit has knowledge of all the channel gains among the different communicating entities from which the BS can determine the achievable rates of all the communication links in order to design a centralized resource allocation algorithm  \cite{a}-\cite{c}. The channel gains between a CU and the BS and a D2D transmitter and the BS are known because every user equipment (UE) transmits its channel quality indicator (CQI) to the BS either periodically or aperiodically \cite{d}. However, it is difficult for the BS to acquire the D2D receiver CQI, i.e. the CQI between a D2D transmitter and a receiver in a pair and that between a CU and a D2D receiver because the cardinality of these channel gains is large. It is equal to the number of CUs times the D2D receivers plus the number of D2D pairs. For a fast fading channel, even if these channel gains are known  through some technique, conveying them to the BS in every time slot requires significant control overhead and power. When these channel gains are not known at the BS, it is a limiting factor to the BS's resource allocation capability.  This makes D2D resource allocation a challenging problem. \\
\indent With this difficulty, D2D resource allocation problems are generally modeled game theoretically and distributed or semi-distributed resource allocation algorithms are designed for each UE to achieve a certain network objective. As compared to a centralized algorithm in which the decisions are taken by the BS, in a distributed algorithm they are taken by the UEs and in a semi-distributed algorithm they are taken by both the UEs and the BS. This reduces the computational load at the BS considerably as some of the computations are done by the UEs themselves. The resulting Nash equilibrium solution sometimes fails to achieve a Pareto optimum solution since the best response of every player may not maximize the social utility,  i.e., the sum of utilities of all the players. A Pareto optimum allocation \cite{e},  \cite{f} is a benevolent allocation in which it is not possible to reallocate resources to make someone better off without making someone worse off and which also maximizes the social utility which a Nash equlibrium may not be able to. The inefficiency of the Nash equilibrium can be quantified through the price of anarchy \cite{g1}.  Some different forms of equilibria like potential function maximizers \cite{g} and coarse correlated equilibria \cite{h},  \cite{i} have also been investigated which gives better behaviour than the Nash equilibrium.

\subsection{Related Work}
We give an overview of some of the research works pertaining to D2D resource allocation in a game-theoretic framework.\\\\
\noindent
\emph{1) Stackelberg game, auction,  graph-theoretical and coalition formation game based  D2D resource allocation algorithms:} In \cite{j}, the authors employ a Stackelberg game theoretic framework for joint power and channel allocation for the D2D users. The Stackelberg model is also applied in \cite{k } in which the authors  propose a suboptimal low complexity two-stage algorithm. In \cite{l}, an iterative scheme of resource allocation is proposed  based on the Stackelberg game model. The study proposed in \cite{m} formulates the D2D resource allocation problem as a reverse iterative combinatorial auction game and an auction algorithm is proposed. However, because of the iterative nature of these algorithms the communication overhead is substantial. The authors of \cite{n} propose a centralized graph-theoretical resource allocation scheme. They model the D2D power control problem as  an exact potential game game with incomplete information and use Q-learning better-reply dynamics to achieve a pure Nash equilibrium which maximizes the potential function. In \cite{o}, two schedulers have been proposed by playing a transferable utility (TU) game and a Stackelberg game among the D2D pairs and the CUs. However, the complexity of the combined scheduler is high. In the TU game, the first scheduler while allocating RBs, does not ensure priority to the CUs as primary users because they consider both CUs and D2D pairs as players. Therefore, the CUs may not get the RBs which maximize their objective function and the CUs' RBs do not get reused by the D2D players. Because of this, the second scheduler in the Stackelberg game allocates the RBs of the CUs that are not reused to those D2D players which have not been allocated any RB. Thus, allocations are done twice which is inefficient. In \cite{p}, the D2D resource allocation problem is formulated as a coalition formation game. To assist the coalition initialization and the coalition generation priority sequences are proposed which increases  the communication overhead. The power allocation problem is solved using the whale optimization algorithm. 
\\\\
\noindent
\emph{2) Reinforcement learning (RL) based D2D resource allocation algorithms:} Recent research efforts employ reinforcement learning techniques in D2D resource allocation. In \cite{q}, a two-step Stackelberg game is proposed to formulate the D2D power and resource allocation problem. In the leader and follower games, determining a suitable price value is difficult. Thus, they use RL for learning the prices. However, the channel model is a path loss model and they have not considered fading.  In \cite{r}, the authors  propose a Stackelberg game guided multi-agent deep reinforcement learning (STDRL) resource allocation algorithm. They derive a  Stackelberg Q-value at each training step which guides the learning direction of the agents in a multi-agent deep reinforcement learning (MADRL) framework. Though, MADRL methods are known to perform well in a dynamic environment, the computational complexity of the algorithm is high because the Stackelberg game and the RL problem are solved alternately. Moreover, in each training step, interactions among the agents are needed for sharing actions which increases the communication overhead. In \cite{s}, a  deep reinforcement learning (DRL) based resource allocation algorithm is proposed. The authors modify the  deep Q-network  (DQN) algorithm which gives exceedingly optimistic estimated values and propose a double deep Q-network (DDQN) algorithm that separates the selection and evaluation of an action by using different neural networks. However, the solution is a Nash equilibrium which may not be Pareto optimal.  However, these RL based  algorithms do not guarantee the rate requirements of the CUs because the D2D players do not check before transmiting if the CU's rate is decreasing below its minimum rate. The rate requirements of the D2D users are also not guaranteed.\\\\
\noindent
\emph{3) Social utility maximization based D2D resource allocation algorithms:} In \cite{e}, the authors propose a Pareto optimal learning algorithm within a game-theoretic framework in which the system states converge probabilistically to stochastically stable states provided interdependence is ascertained in the game. Interdependence implies that the utility of a player is affected by the choice of actions of the other players. In a stochastically stable state, the action profile is a  Pareto (efficient) action profile which maximizes the social utility of the players. The players have no knowledge about the actions and utilities of other players and can observe only their own utilities. This algorithm has been applied to the optimal association problem of CUs to BSs in \cite{t} and for the control of wind turbines in a wind farm \cite{u}. \\
\indent For the D2D resource allocation problem, the algorithm of \cite{e} has been applied in \cite{v} and \cite{w}. In \cite{v}, a power and resource allocation algorithm for D2D users and small cell users is proposed. For resource allocation, the authors have applied the algorithm of \cite{e}. They report that their proposed algorithm achieves a stochastically stable state and it is a local optimal solution. However, their proposed resource allocation algorithm is suboptimal because the stochastically stable states cannot be determined for a Markov process that is not a regular perturbed Markov process  \cite{e}. As they consider the channel to be a fast fading channel, the Markov process is not aperiodic and therefore not ergodic. As a result, the Markov process is not a regular perturbed Markov process. In \cite{w}, a learning algorithm is proposed which the authors report maximize the sum of utilities of the D2D players where the utility of a D2D player is a logarithmic function of its rate. Since they have considered block Rayleigh fading, the algorithm is optimal only when the channel is constant in a block. However, the block needs to be sufficiently large for the algorithm to converge, which is not a practical assumption in a block Rayleigh fading channel.  When the channel changes in the next block, the system state space changes and thus the stochastically stable states change. If the block is not sufficiently large, before the algorithm can converge to the stochastically stable states, the state space and  stochastically stable states would change in the next block.  The authors have also not given the value of the block size parameter in the simulation parameters table.  In both  \cite{v} and  \cite{w}, interdependence in the games are ensured by allowing players to use the same resources because of which interference is caused among the players. This decreases a player's rate and therefore its utility. Because of the actions of other players, a player's utility gets affected which satisfies the interdependence condition in the games. However, as more players are accommodated in the cell, the rate of a player keeps decreasing because of the interference from the other players. If the number of D2D pairs is very high then the average rate of the D2D players decreases.  Therefore, this is a limiting factor to the number of D2D pairs that can be accommodated in the cell. Besides, the algorithms of \cite{v} and  \cite{w} do not guarantee quality of CU and D2D  communications. Because of interference the communications of the CUs and the D2D users can get severely jeopardized.
\\ 
\indent In all these works, only \cite{k}, \cite{l} and \cite{p}  guarantee QoS requirements for the CUs. Besides, the rate requirements for the D2D users are guaranteed in only \cite{p}. 

\subsection{Contributions}
\indent In this paper, we address the challenging problem of designing a Pareto optimal D2D resource allocation algorithm without inter-D2D interference when DCSI-R is unavailable at the BS. We employ a game theoretic framework for D2D resource allocation and propose two suboptimal semi-distributed algorithms.  Our aim is to maximize the social utility of all the D2D players while ensuring that the QoS requirements of the CUs and the D2D players are satisfied unlike \cite{v} and  \cite{w} which do not guarantee QoS requirements for the CUs and the D2D players. The D2D users who are the players take their actions in a distributed manner and transmit them to the BS, based on which the BS decides which resources are to be allocated to them that have been already allocated to the CUs.  Since part of the D2D resource allocation algorithm runs at the BS while the other part runs at the D2D transmitter, the computational load at the BS reduces. Since our algorithms are semi-distributed DCSI-R doesn't need to be transmitted to the BS which reduces the control overhead and power wastage of the network.\\ 
\indent Our work differs from \cite{t} in that in \cite{t}, multiple players are associated to the same BS. To ensure time fairness the throughput observed by a player is its transmission rate divided by the number of players associated to the BS. As the number of players increase, their throughputs decrease. However, in our algorithm multiple players are not allocated the same RB. The BS allocates orthogonal RBs to the D2D players. Thus, as the players increase, their throughputs do not decrease. If orthogonal RBs are not allocated to the D2D players, inter-D2D interference would occur as in \cite{v} and  \cite{w} which would decrease a D2D player's throughput because of interference from the other D2D players. \\
\indent We demonstrate how interdependence can be designed within the framework of D2D resource allocation.  Unlike  \cite{t}, we design interdependence in a very different manner. In \cite{t}, when multiple players are associated to a BS, a player's throughput decreases  due to the other players because time fairness is ensured by the algorithm to the players which results in interdependence.  In contrast, in our algorithms interdependence is designed by a frame structure in which the D2D players are arranged in a sequence that shifts in every subframe of the frame and by the utilities of the D2D players which are calculated at the end of the frame.  In  \cite{v} and  \cite{w}, since multiple D2D players are allocated the same RB,  inter-D2D interference is caused which brings about interdependence among the D2D players as each D2D player’s throughput gets affected due to the interference from other players. However, this decreases each D2D player’s throughput as the number of players keep increasing. In our algorithms,  this does not happen because interdependence is designed without inter-D2D interference.
\noindent The main contributions of our work are as follows:
 \begin{itemize}
 \item  We apply the distributed algorithm proposed in \cite{t} and develop a semi-distributed suboptimal D2D resource allocation algorithm.  For the channel, we consider a path loss propagation model.
\item In \cite{t}, no randomness in the system model has been considered, which is a challenging problem. We consider shadowing, fast fading and mobility of CUs.  We modify our first proposed suboptimal algorithm to obtain another suboptimal algorithm which gives a robust performance. 
\end{itemize}

\noindent The organization of this paper is as follows. In Section II, we present the system model which comprises of an underlay D2D network and the game theoretic model of the D2D resource allocation problem. In Section III, we propose a suboptimal resource allocation algorithm for the D2D players. In Section IV, we present an example to demonstrate Pareto optimality. Next, in Section V we consider shadowing, fast fading and CU mobility and propose another suboptimal algorithm. We verify the performance of our proposed algorithms through simulations in Section VI. In Section VII, we summarize our findings.
 	\begin{figure}[t]
 	\centering
   	\scalebox{0.8}{\input{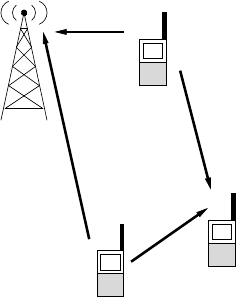_t}}
 	\caption{Resource reuse between a CU and a D2D pair.}
 	\label{fig:1}
 	\end{figure}
 \section{System Model}
\subsection{Network Model}
We consider a single cell scenario with a BS and $N_C$ CUs in uplink transmission mode. We assume that the BS has already allocated RBs to the CUs. Consider $N_D$ D2D pairs which have to be allocated the RBs of the CUs. We assume that a CU is allocated one RB in each subframe and a D2D user can be allocated the RB of only one CU. We denote the BS by $B$, a CU by $c$ and a D2D pair by $d$.
As shown in Fig. 1 \cite{x}, the channel gain between a CU $c$ to the BS $B$ is denoted by $g_{cB}$ and the channel gain between a D2D transmitter to the BS $B$ is given by $g_{dB}$. We assume that both these channel gains or channel state information (CSI) are known at the BS. The channel gain between a D2D transmitter and its receiver is given by $g_{d}$. The channel gain between a CU $c$ and a D2D receiver is given by $g_{cd}$. We assume that the BS does not know the D2D receiver CSI (DCSI-R), i.e. the channel gains $g_{d}$ and $g_{cd}$ but the D2D receiver knows them through some technique. We consider the channel to exhibit path loss.\\
\indent Let a CU's transmit power and a D2D user's transmit power be $P_C$ and $P_D$ respectively.
Given that a D2D user $d$ is allocated the RB of a CU $c$, the signal-to-interference-plus-noise ratio (SINR) of the CU $c$ at the BS is expressed as,
\begin{eqnarray}
\gamma_{c}=\frac{P_C \, g_{cB}}{P_{D} \, g_{dB}+N_0},
\end{eqnarray}
where $N_0$ is the average noise power and $P_{D} \, g_{dB}$ is the interference power received from a D2D user.

\subsection{Game Theoretic Model}
\indent Let $\mathcal{G}$ be a strategic form game with $N_D$ D2D players constituting a set $\mathcal{N}_D=\{d_1, \, d_2, \, ... \, , \, d_{N_D}\}$. Each D2D player $d$ can choose an action from a set of finite actions $\mathcal{A}_d$. Let the joint action set be $\mathcal{A}=\mathcal{A}_d^{N_D}$ and the utility function be $U_d: \mathcal{A}\rightarrow \mathbb{R}$. \\
\indent As per the LTE standard, time is divided into intervals of 1 ms duration called transmission time intervals (TTIs) or subframes. Let the subframe be indexed by $\tilde{n}$. We define a frame to be a time window consisting of $N_D$ subframes.
We number a frame by $n$. Both $\tilde{n}$ and $n$ start from 1. At the start of a frame $n$, the action of every D2D player $d$ is to select a list $l_d(n)$ and transmit it to the BS.  The list contains $K$ possible CUs drawn from a set $\mathcal{N_C}=\{c_1, \, c_2, \,...\,, \, c_{N_C}\}$ of available CUs to form a $K$-tuple. Hence, the total number of lists that a D2D player can generate is $K$ permutations of $N_C$ CUs,  $L=\perm{N_C}{K}$ which constitutes the action set $\mathcal{A}_d$ of a D2D player. An action profile is a tuple of the actions taken by the D2D players.
 Let an action profile be denoted as $l=(l_1,\, l_2,\, ... \, ,\, l_{N_D}) \in \mathcal{A}$. We denote
$l_{-d}$ to be the action profile of the D2D players other than the D2D player $d$. Therefore, the action profile can also be written as $l=(l_d,\, l_{-d})$. \\
\indent We design an algorithm at the BS with the help of which the BS decides which CU's RB is to be allocated to a D2D player in each subframe $\tilde{n}$ as per its list. Using this CU's RB, a D2D transmitter then starts transmitting. A D2D player $d$ then observes a throughput of $R_d(\tilde{n})$ given by,
\begin{eqnarray}
R_d(\tilde{n})=B\log_2\bigg(1+\frac{P_D \, g_{d}}{P_{C} \, g_{cd}+N_0}\bigg)
\end{eqnarray}
Utility of a D2D player is the payoff that it receives at the end of a frame and we define it as follows.
\begin{definition}
\emph{(Utility)} At the end of a frame $n$, the average of the throughputs obtained by a D2D player $d$ over the subframes of a frame  is calculated and is given by,
\begin{eqnarray}
r'_d(n)=\frac{1}{N_D}\sum_{\tilde{n} = uN_D+1 }^{(u+1)N_D}R_d(\tilde{n}),
\end{eqnarray}
where $uN_D+1$ denotes the start of a frame, $(u+1)N_D$ denotes the end of a frame and $u=0,1,2, \ldots$ such that $u=0$ stands for frame 1, $u=1$ for frame 2 and so on for the subsequent frames. This is normalized by a normalization factor $\alpha$ to get the utility $r_d(n)$,
\begin{eqnarray}
r_d(n)=\frac{1}{\alpha}r'_d(n),
\end{eqnarray}
such that $0 \le r_d(n) < 1$, \cite{e}.
\end{definition}
\noindent The algorithm at the BS ensures that the utilities and the actions of the D2D players are aligned, that is, the utility of each D2D player is a function of the lists of all the D2D players (action profile $l$), $r_d(n)=U_d(l)$.
 
\begin{definition}
\emph{(Social Utility)} The social utility is the sum of the utilities of the D2D players obtained in a frame $n$ and is given by $W_{l}(n)= \sum_{d\in\mathcal{N_D}}r_d(n)$.
\end{definition}

\noindent Our objective is to maximize the social utility of the D2D players in every frame $n$, over all possible action profiles, to get the optimal action profile $l^*(n)$, subject to a minimum utility constraint of $r_d^{tgt}$ for each D2D player and a minimum SINR constraint of $\gamma^{tgt}$ for each CU in every subframe,
\begin{align}
& \quad l^*(n)=\argmax_{l \in \mathcal{A}} \,  \sum_{d\in\mathcal{N_D}}r_d(n), \\
\text{s.t.} & \quad r_d(n) \geq  r_d^{tgt}, \quad \forall  \, \, d \in \mathcal{N_D}, \nonumber\\
 & \quad \gamma_{c}(\tilde{n}) \geq  \gamma^{tgt}, \quad \forall  \, \, c \in \mathcal{N_C}. \nonumber
\end{align}
Thus, the optimal action profile $l^*(n)$ is the set of Pareto optimal actions taken by the D2D players such that the social utility of the D2D players is maximized and the utility constraint for each D2D player and the SINR constraint for each CU is satisfied.\\
\indent To understand Pareto optimality, refer to the example of Prisonner's dilemma game in  \cite{e}. The sum of utiilities or payoffs of the prisoners which is termed as the social utility is maximum in this game when the actions of the prisoners are to confess to the crime. Thus, their actions to confess are Pareto optimal or welfare maximizing as compared to the Nash Equilibrium which does not maximize their welfare. As per \cite{e} and  \cite{t},  their proposed algorithms converge to a Pareto optimal solution only when interdependence is ensured in the game structure. Since our proposed algorithm has been developed from the algorithm of \cite{t}, this is applicable to our proposed algorithm as well.

\begin{definition}
\emph{(Interdependence, \cite{e})}
A game $\mathcal{G}$ is interdependent if for every action profile $l \in \mathcal{A}$ and for every proper subset of players $\mathcal{N} \subset \mathcal{N}_D$, there exists a player $d\notin \mathcal{N}$ and a choice of actions $l'_{\mathcal{N}} \in \mathcal{A}_d^{|\mathcal{N}|}$ such that $r_d(l'_{\mathcal{N}},l_{-\mathcal{N}})\neq r_d(l_{\mathcal{N}},l_{-\mathcal{N}})$.
\end{definition}

\section{Resource Allocation Algorithm}
The resource allocation algorithm that we propose is a game theoretic learning algorithm based on the algorithm proposed in \cite{t}.
We consider the internal state variables of a D2D player to be its list $l_d(n)$, its utility $r_d(n)$ and another variable called its mood $m_d(n)$. The mood can take two values: content ($C$) and discontent ($D$). The collection of states of all the D2D players constitutes the system state $s(n)=(s_1(n), \, ... \,,\, s_{N_D}(n))$. Let the action profile in every frame $n$ be $l(n) = (l_1(n), \, ... \, ,\, l_{N _D}(n))$, the utility profile be $r(n) = (r_1(n), \, ... \, ,\, r_{N_D}(n))$ and the mood profile be $m(n) = (m_1(n), \, ... \, ,\, m_{N_D}(n))$. A stochastically stable state is a tuple of the optimal action profile (the optimal lists), the utility profile which satisfies the utility constraints of of the D2D players and the SINR constraints of the CUs, where the sum of utilities of the D2D players is the maximum D2D social utility and the mood profile in which the moods of the D2D players are content. This solves the problem stated in $(5)$.
\begin{algorithm}[t]
	 \caption{Resource Allocation Algorithm}\label{alg:1}
	 \begin{algorithmic}[1]
		\ForAll{$\tilde{n} =1,2,\dots$}
		\If{$\tilde{n}\: mod \: N_D==1$}
		\Procedure{ListSelection}{$m_d(n-1),l_d(n-1)$}
		 \If{$m_d(n-1)==C$}
		 \State $l_d(n) \gets i$, \quad w.p. $\frac{\epsilon^k}{L-1}$, \quad $\forall \,\, i \in \mathcal{A}_d \backslash l_d(n-1)$
		 \State $l_d(n)\gets l_d(n-1)$, \quad w.p. $1-\epsilon^k$
		 \Else
		 \State $l_d(n) \gets i$, \quad w.p. $\frac{1}{L}$,\quad $\forall \, i \in \mathcal{A}_d$
		 \EndIf
		 \EndProcedure
		 \State D2D player $d$ transmits its list $l_d(n)$ to the BS.
		 \State $h\gets1$
		 \EndIf
		 \Procedure{AllocationBS}{$l(n),D2DP \textunderscore seq$}
		 \State $\bm{a}\gets$ \textsc{AllocBS}$(l(n),D2DP \textunderscore seq)$
		 \State Left circular shift $D2DP\textunderscore seq$ by 1
		 \State $c\gets \bm{a}(d)$ \Comment{BS transmits to the  D2D player $d$  that it  is allocated CU $c$'s RB.}
		 \EndProcedure
		 \State D2D player $d$ transmits using CU $c$'s RB, observes throughput $R_d(\tilde{n})$ and normalizes it to $r'_d(\tilde{n})$.
	 	 \State  $\bm{r}_d(h)\gets r'_d(\tilde{n})$ \Comment{D2D player $d$ stores $r'_d(\tilde{n})$ obtained in each subframe $		\tilde{n}$ in vector $\bm{r}_d$. }
		  \State $h\gets h+1$
		  \If{$\tilde{n}\: mod \: N_D==0$}
		  \Procedure{UtilityCalculation}{$\bm{r}_d$}
		  \State $r_d(n)=sum(\bm{r}_d)/N_D$
		  \EndProcedure
		  \Procedure{MoodCalculation}{$s_d(n-1), l_d(n)$, $r_d(n)$}
		  \If{$m_d(n-1)==C \ \& \ [l_d(n-1),r_d(n-1)]==[l_d(n),r_d(n)]$}
		  \State $m_d(n)\gets C$
		  \ElsIf{$r_d(n)\ge r_d^{tgt}$}
		  \State $m_d(n)\gets C$,  w.p. $\epsilon^{1-r_d(n)}$
		  \State $m_d(n)\gets D$,  w.p. $1-\epsilon^{1-r_d(n)}$
		  \Else
		  \State $m_d(n)\gets D$
		  \EndIf
		  \EndProcedure
		  \EndIf
		  \EndFor
	 \end{algorithmic}
	 \end{algorithm}
	 
\begin{theorem}
The stochastically stable states of a regular perturbed DTMC are the states $s^* \in \mathcal{S}$, which satisfy the following conditions: \\
\emph{(1)}~The action profile $l(n)$ maximizes the social utility $W_{l}=\sum_{d\in\mathcal{N_D}}r_d(l)$ while satisfying the utility constraints of the D2D players and the SINR constraints of the CUs.\\
\emph{(2)}~The utility $r_d(n)$ of a D2D player $d$ is aligned to the action profile $l(n)$, that is, $r_d(n)=U_d(l)$.\\
\emph{(3)}~The mood $m_d(n)$ of each D2D player $d$ is content.
\end{theorem}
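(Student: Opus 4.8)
The plan is to invoke the theory of regular perturbed Markov processes and resistance trees. Since Lemma~1 already establishes that $P^\epsilon$ is a finite, aperiodic, irreducible (hence ergodic) chain whose transition probabilities have the regular-perturbation form $c\,\epsilon^{R(s_1,s_2)}(1+o(1))$ with nonnegative resistance exponents $R(s_1,s_2)$, the stochastically stable states ($\pi^0_{s^*}>0$) are exactly those lying in the recurrence classes of $P^0$ that minimize the stochastic potential, i.e. the minimum total resistance over all spanning in-trees directed toward the class on the graph of recurrence classes. First I would characterize the recurrence classes of $P^0$. Setting $\epsilon=0$ in \emph{Algorithm~1}, a content player keeps its list with probability one, a discontent player keeps randomizing, and the only route to contentment after a configuration change (the $\epsilon^{1-r_d(n)}$ branch) closes off. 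I would show this yields two kinds of recurrence classes: (i) each singleton all-content state $C_a$ in which every player is content, the utility is aligned with its action profile $a$, and (by the state-space restriction noted in the proof of Lemma~1) every rate constraint $r_d\ge r_d^{tgt}$ holds; and (ii) a single discontent class $\mathcal{D}$ consisting of all states with at least one discontent player, which is connected and absorbing in $P^0$ because discontent players explore uniformly and cannot turn content when $\epsilon=0$.

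Next I would compute the two families of resistances that drive the analysis. To \emph{leave} an all-content state, at least one content player must explore, which by Step~3 of the list-selection rule costs probability $\epsilon^k$; interdependence (\emph{Definition~3}) then guarantees that this single deviation changes some player's utility, so the configuration check fails and the affected players turn discontent with probability tending to one. Hence the resistance from any $C_a$ into $\mathcal{D}$ is exactly $k$. To \emph{enter} a target all-content state $C_a$ from $\mathcal{D}$, every player must randomize to the list prescribed by $a$ (resistance $0$, since the uniform exploration probability $1/L$ is bounded away from $0$) and then each player $d$ must turn content, which happens with probability $\epsilon^{1-r_d}$; summing the exponents gives resistance $\sum_{d\in\mathcal{N_D}}(1-r_d)=N_D-W_a$, where $W_a=\sum_{d\in\mathcal{N_D}}r_d(a)$ is the social utility of $a$.

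With these resistances in hand I would build the minimum in-trees on the reduced graph whose nodes are $\mathcal{D}$ and the states $\{C_a\}$. Because the only cheap route between two content states passes through $\mathcal{D}$, the cheapest in-tree rooted at a content state $C_a$ sends every other content state into $\mathcal{D}$ at cost $k$ each and sends $\mathcal{D}$ into $C_a$ at cost $N_D-W_a$, giving stochastic potential $(M-1)k+(N_D-W_a)$, where $M$ is the number of all-content states; the in-tree rooted at $\mathcal{D}$ has potential $Mk$. Comparing, the potential of $C_a$ is strictly smaller than that of $\mathcal{D}$ precisely when $k>N_D-W_a$, which holds by the hypothesis $k>N_D$, so $\mathcal{D}$ is never stochastically stable. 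Among the content states the common term $(M-1)k$ cancels, whence minimizing the potential is equivalent to maximizing $W_a$ subject to feasibility. The minimizers are therefore exactly the all-content, utility-aligned states whose action profile maximizes the social utility while meeting every rate constraint, which are precisely conditions~1--3 of the theorem.

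The step I expect to be the main obstacle is the clean determination of the recurrence-class structure together with the two resistances, and in particular the exact use of interdependence. I must verify that a \emph{single} exploration out of a content state is enough to drive the system into $\mathcal{D}$, rather than sliding directly into another content state, for otherwise low-resistance edges between content states would appear and the tree computation would change; this is exactly where \emph{Definition~3} is needed, since it forces some other player's utility to move whenever one player deviates. I would also have to confirm that $\mathcal{D}$ is a single communicating class and that no all-content state violating a rate constraint survives in $\mathcal{S}$, so that feasibility is automatically encoded in the set $\{C_a\}$ rather than having to be imposed separately.
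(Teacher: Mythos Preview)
Your approach is essentially the same as the paper's: identify the recurrence classes of $P^0$, compute the $C\to D$, $D\to C$ (and $C\to C$) resistances, build the minimum in-trees, and compare stochastic potentials to conclude that the social-utility-maximizing all-content states are stochastically stable. One technical inaccuracy: your class $\mathcal{D}$ should consist only of the \emph{all}-discontent states, not all states with at least one discontent player---under $P^0$ a discontent player never becomes content, so mixed states can reach the all-discontent set but not conversely, making them transient rather than part of the recurrence class (the paper proves this explicitly in its Lemma~2). This does not change your resistance computations or the tree analysis, since transient states are absorbed at zero resistance into $D^0$, so the final potentials $(|\mathcal{C}^0|-1)k+(N_D-W_a)$ versus $|\mathcal{C}^0|\,k$ and the conclusion remain as you state.
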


\begin{proof}
The proof is in the same lines as \cite{t}, provided interdependence is ensured in the game.
\end{proof}
The resource allocation algorithm, Algorithm 1, which we will discuss next consists of the following stages: 1)~the list selection of a D2D player, 2)~the BS's resource allocation algorithm, 3)~the utility calculation of a D2D player and 4)~it's mood calculation method.

	\begin{figure}[t]
 	\centering
   	\scalebox{0.4}{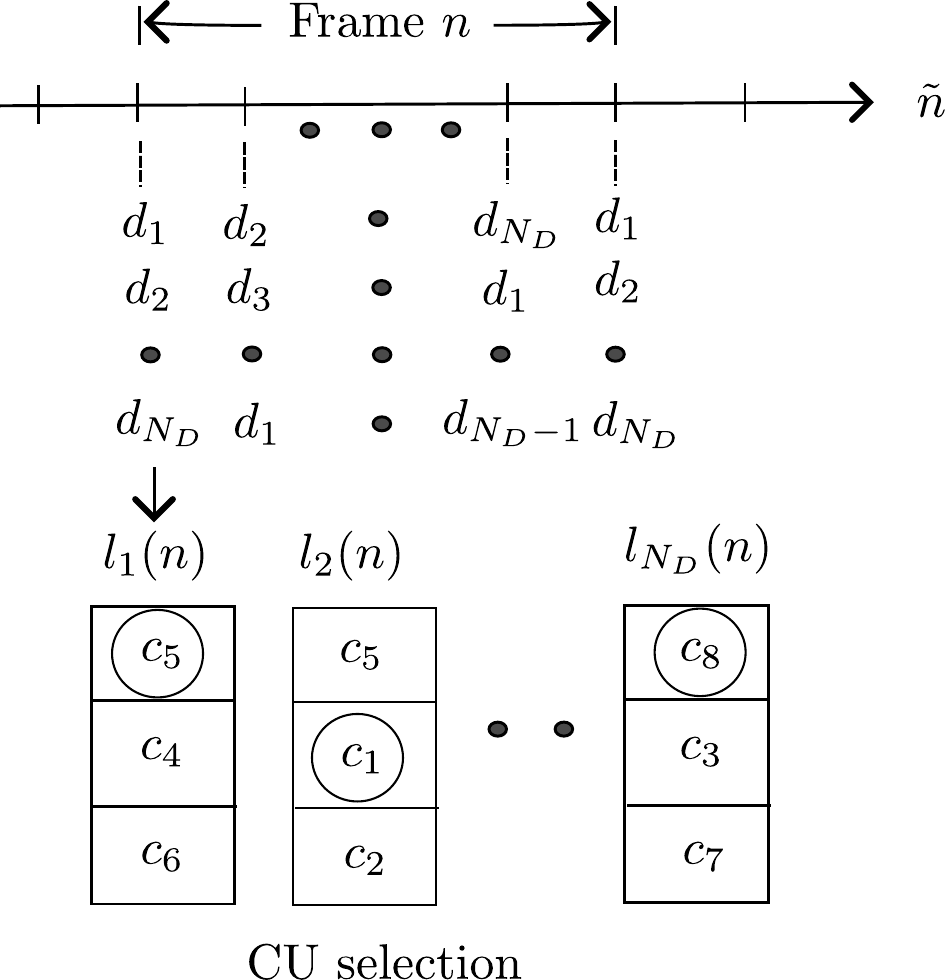}
 	\caption{D2D players' sequencing and the selection of CUs from the lists of the D2D players in the subframes of a frame.}
 	\end{figure}
 			  
\subsection{List Selection}
Every D2D player selects a list at the start of every frame. The mood of a D2D player in the previous frame helps it to select its list in the present frame. Let $\epsilon$ be the experimentation rate, such that $\epsilon > 0$ and $k$ is a constant such that $k>N_D$\cite{e}. A D2D player whose mood is content can either decide to retain (exploit) its previous list with probability (w.p.) $1-\epsilon^{k}$ or select (explore) a new list w.p. $\epsilon^k/(L-1)$.  It is more likely to retain its previous list rather than exploring other lists if $\epsilon^k$ is less than $1-1/L$. Increasing $\epsilon$ results in an increase in exploration. 
However, increasing $k$ results in a decrease in $\epsilon^k$ due to which the probability of exploring becomes less. Thus, $\epsilon$ and $k$ are important design parameters that decide on the frequency of exploring versus exploiting lists. A D2D player whose mood is discontent explores all the $L$ possible lists and chooses a random list w.p. $1/L$. Steps 3 - 10 of Algorithm 1 demonstrate the list selection method for a D2D player $d$. Once this is over for all the D2D players, every D2D player transmits its list to the BS. The BS then allocates the RBs of the CUs to the D2D players as per the following algorithm (refer Algorithm 2).
	 
\subsection{Resource Allocation Algorithm at the BS}
\noindent \emph{1) D2D players' sequencing in the subframes of a frame:} At the beginning of every frame, the lists of all the D2D players are available at the BS. The BS follows a sequence of D2D players which we term as the $D2DP \_seq$ in every subframe of a frame so that it can select their lists as per this sequence. The $D2DP \_seq$ in every subframe shifts left by one (left circular shift) as compared to the $D2DP \_seq$ in the previous subframe. This is done to prioritize the lists of the D2D players in every subframe. In the first subframe of every frame the BS orders the D2D players in a sequence, $D2DP \_ seq=d_1,d_2, \, ... \,, \, d_{N_D}$ as shown in Fig. 2. In the next subframe, the $D2DP \_seq$ starts from $d_{2}$ such that $D2DP \_ seq=d_2,\, ... \,, \,d_{N_D},d_1$ and so on till the last subframe of a frame when the $D2DP \_seq$ starts from $d_{N_D}$. In this way, the selection of the first D2D  player in the $D2DP \_seq$ changes in every subframe of a frame.

\begin{algorithm}[t]
	 \caption{Resource Allocation Function at the BS}\label{alg:2}
	 \begin{algorithmic}[1]
		\Statex \textbf{Input:  }$l(n), D2DP  \textunderscore seq$
		\Statex \textbf{Output:  }$\bm{a}$
		\Function{AllocBS}{$l(n), D2DP \textunderscore seq$}
		\ForAll{$j =1\: to \: N_D$}
		\State $d\gets D2DP \textunderscore seq[j]$
		\State $\bm{a}(d)\gets0$ \Comment{No CU is mapped to a D2D player  $d$ initially.}
		\State BS selects D2D player $d$'s list $l_d(n)$.
		\State $x\gets1$ \Comment{$x$ is the index of the D2D player $d$'s list.}
		\While{$x \neq list \textunderscore length+1$}
		\State  BS selects element $c$ from the $x^{th}$ index of $l_d(n)$.
		\If{CU $c$'s RB is not assigned to the D2D players  before D2D player $d$ in $D2DP \textunderscore seq$}
	 	\State $\bm{a}(d)\gets c$  
	 	\State \textbf{break} 
		\Else		
		\State $x\gets x+1$
		\EndIf 	
		\EndWhile
		\If {$\bm{a}(d) \neq 0$}			
		\If{$\gamma_c<\gamma^{tgt}$}\Comment{Allocation Test}
		\State $\bm{a}(d)\gets0$ \Comment{No CU is mapped to a D2D player  $d$.}
		\EndIf
		\EndIf
		\EndFor
		\State \Return $\bm{a}$
		\EndFunction
 	\end{algorithmic}
 \end{algorithm}
 
\noindent \emph{2) Orthogonal allocation of RBs:}
In every subframe, the BS ensures orthogonal allocation of CUs' RBs. In a subframe, it selects the first element (a CU) from a D2D player's list  and allocates this CU's RB to it if it is not allocated to players prior to it in the $D2DP\_seq$ followed in that subframe (see Fig. 2). If this element has already been assigned to another D2D player, the BS selects the next element from the D2D player's list. This goes on till it reaches the end of the list. In case all the elements of its list are already assigned to the other players, it is not allocated any RB (refer Algorithm 2).\\
\indent   From Fig. 2, consider the first subframe and the D2D player $d_1$ which gets the first priority. We must note that  in the first subframe, the first D2D player $d_1$'s allocation of CU $c_5$'s RB  in $D2DP \_seq$  is not getting affected due to the allocation of  CUs' RBs  to the other D2D players. But in the second subframe its priority becomes the least ($N_D$), that is, it is the last element of the  $D2DP \_seq$ as the BS's priority shifts to the next D2D player $d_2$. Thus, its allocation of CU's RB gets affected due to the allocation of CUs' RBs to the other D2D players who are before it in the $D2DP \_seq$.  In this way, in a frame it is ensured that the allocation of a CU's RB to a D2D player and thus its throughput gets affected by the allocation of CUs' RBs to the other D2D players. This is done to ensure interdependence in the game. We must note that the allocation of CUs' RBs depends on the D2D players' lists. Thus, a D2D player's throughput depends on the lists or actions of other D2D players.\\
\indent We must also note the importance of list selection by a D2D player. If a D2D player does not send a list and instead selects a CU and conveys it to the BS, then the BS in order to ensure orthogonal allocation of RBs, can communicate to it that this CU's RB has already been allocated to another D2D player and so it needs to choose another CU. The D2D player again has to convey this to the BS. This can happen many times, which increases the communication overhead between the BS and the D2D player.
Instead of this, if the D2D player sends a list to the BS, then the BS can check if the first element of its list is already allocated to another D2D player. If so, then it can select the next element of the D2D player's list.
This is equivalent to communicating another choice of CU by the D2D player to the BS. Hence, sending a list reduces the control overhead and conserves the BS's and the D2D player's power and resources.
 			
\noindent \emph{3) Allocation Test:}
When the BS assigns a CU's RB to a D2D player, in order to ensure that the CU's communications are not hampered in each subframe the BS checks whether a CU's SINR will decrease below its SINR target $\gamma^{tgt}$ due to interference from the D2D player which has been assigned this CU's RB. If the SINR of the CU $\gamma_{c}$ decreases below $\gamma^{tgt}$, the BS does not allocate the CU's RB to the D2D player. Consequently, the D2D player's throughput $R_d(\tilde{n})$ is zero in that subframe. After this test, the BS conveys to each D2D player which CU's RB is allocated to it  (refer Algorithm 2). The mapping of CUs to the D2D players in each subframe is termed as an allocation profile and is denoted by $a$. For example, from Fig. 2 if we assume that in the first subframe  the CUs $c_5, \, c_1, \, ... \,,c_8$ selected by the BS from the lists $l_1(n)$ to $l_{N_D}(n)$ (action profile) of the D2D players pass the allocation test, then the allocation profile in the first subframe is $a=(c_5, \,c_1,  \, ... \,,c_8)$. Since an allocation profile is generated in every subframe of a frame and there are $N_D$ subframes in a frame, a set of $N_D$ allocation profiles are generated in a frame from the action profile.

\subsection{Utility Calculation}
In a subframe $\tilde{n}$, a D2D player transmits using the RB of the CU allocated to it by the BS and observes a certain throughput $R_d(\tilde{n})$ which it normalizes to $r'_d(\tilde{n})$. At the end of a frame, it calculates an average of its normalized throughputs over the subframes which is its utility $r_d(n)$.\\
\indent When a D2D player's utility is calculated at the end of a frame as an average of its throughputs, since its throughput gets affected due to the lists (actions) of others, its utility also gets affected due to the lists of others. As per Definition 3, this ensures interdependence because a player's utility gets affected by the actions of others.
 Thus, interdependence is enforced among the players because of the orthogonal allocation of CUs' RBs in a subframe as per the $D2DP\_{seq}$, the prioritization technique of D2D players across subframes and the way in which the utility is calculated at the end of every frame. \\
\indent However, it may happen that the first elements of all the lists of the D2D players are different. In this case, there is no interdependence in the game. This can occur over some frames. When the lists change again and this case does not occur, again interdependence is ensured in the game. Therefore, our proposed algorithm is suboptimal. 

\subsection{Mood Calculation}
We define the present state $\bm{s}_d(n)$ of a D2D player $d$ as a 3-D tuple, $\bm{s}_d(n)=(l_d(n), \, r_d(n), \,m_d(n))$. The mood of a D2D player is determined (refer lines 26 - 35 of Algorithm 1) at the end of each frame from its previous state, its present list and utility. A content D2D player will remain content, if its present configuration does not not change with respect to its previous one, that is, $[l_d(n), \, r_d(n)]=[l_d(n-1), \, r_d(n-1)]$. However this condition is violated if $l_d(n)\neq l_d(n-1)$ or $r_d(n) \neq r_d(n-1)$ or both. We next explain these cases. Suppose in the present frame $n$,  a content D2D player $d$ decides to explore other lists. Then, its list $l_d(n)$ changes and hence $l_d(n)\neq l_d(n-1)$. Thus, its present configuration changes. If this player decides to retain its previous list, $l_d(n)= l_d(n-1)$, its present utility $r_d(n)$ can still change with respect to its previous utility $r_d(n-1)$ because if some other player changes its list, then it can change the utility of this player because of interdependence and hence $r_d(n) \neq r_d(n-1)$. Therefore, in these cases as its present configuration changes, the chances of becoming discontent or content depends on its present utility $r_d(n)$. If $r_d(n)$ is greater than the minimum required utility $r_d^{tgt}$, it remains content w.p.  $\epsilon^{1-r_d(n)}$ or it becomes discontent  w.p.  $1-\epsilon^{1-r_d(n)}$. The chances of remaining content are more than that of becoming discontent, if $r_d(n)$ is sufficiently high. If $r_d(n)$ is less than $r_d^{tgt}$, then it becomes discontent w.p. one. On the other hand, if a discontent D2D player $d$'s utility $r_d(n)$ is greater than $r_d^{tgt}$, it becomes content  w.p. $\epsilon^{1-r_d(n)}$ or it remains discontent w.p.  $1-\epsilon^{1-r_d(n)}$. If $r_d(n)$ is less than $r_d^{tgt}$, then it remains discontent w.p. one.
 
\section{Concept Illustration}
We demonstrate through the following example how the actions of D2D players can converge to a Pareto optimal solution which maximizes their social utility. The action of a D2D player is to choose a CU and its utility is its throughput. We consider the D2D players' social utility to be their sum throughput. In order to theoretically calculate the optimal action profile we assume that all the channel gains are known so that we can determine the throughputs of the D2D players. We can then determine their Pareto optimal actions which would maximize their social utility.  We show through simulations how our proposed algorithm performs.\\
\noindent \emph{1) Topology:}
We consider a simple topology with three CUs $c_1$, $c_2$ and $c_3$ positioned in a semi-circle of radius $R_1$ as shown in Fig. 3 and three D2D pairs $d_1$, $d_2$ and $d_3$ in an outer semi-circle of radius $R_2$.\\
\emph{2) Parameter Setting:}
Let $R_1=50$ m and $R_2=100$ m. Every D2D transmitter subtends an angle of 10 degrees at the BS with respect to its receiver. The distance between a D2D transmitter and its receiver can be calculated from Fig. 3 and is $17.43$ m. Let $P_D$ and $P_C$ be 10 mW. The path loss model be $1/D^\eta$ where $D$ is the distance between a transmitter and a receiver and $\eta$ is the path loss exponent which we set to $2$. Let $N_0=0.1$ mW, bandwidth $B=1$ Hz and $\gamma^{tgt}= 0$ dB. We set $r_d^{tgt}$ of each D2D player to $50$ $\%$ of the rates that they achieve when the action profile is $(c_1,c_2,c_3)$. Let the length of the lists be $K=3$.
	\begin{figure}[t]
 	\centering
 	\scalebox{0.4}{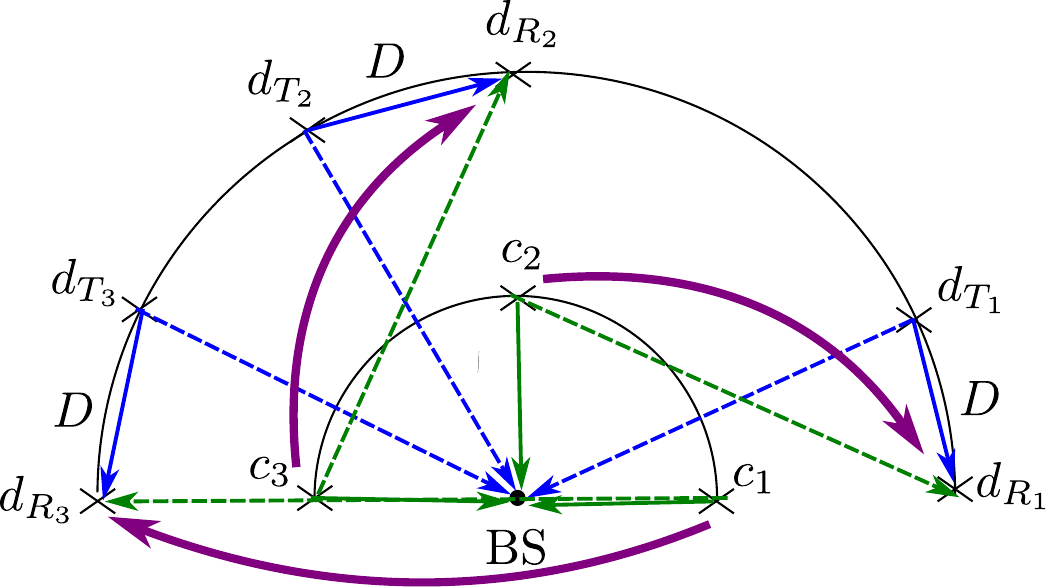}
 	\caption{Example demonstrating the optimal mapping of CUs and the D2D pairs.}
 	\end{figure}
 	\\
\emph{3) Analysis:} Since the distances of the three CUs to the BS are $R_1$, the powers received from them at the BS are the same. The interference powers received from the D2D transmitters at the BS are also the same because they are also equidistant from the BS. Hence, the SINRs of all the CUs are the same no matter what the action profile is. Since these SINRs are higher than $\gamma^{tgt}$, the minimum SINR constraint of the CUs is met. Therefore, the D2D players do not hamper CUs' communications. The best CU for a D2D player is the one which is farthest from a D2D player's receiver. If CU $c_1$'s RB is allocated to D2D pair $d_3$, then CU $c_2$'s RB can be allocated to either $d_1$ or $d_2$. If $c_2$'s RB is allocated to $d_2$, then $d_2$ faces the worst case interference from $c_2$ while $d_3$ gets $c_3$'s RB which is the best CU for it. This allocation is not Pareto optimal because while $d_1$ and $d_3$ get their best choices of CUs, $d_2$ gets the worst. Thus, $c_2$'s RB should be allocated to $d_1$ and $c_3$'s RB should be allocated to $d_2$. This makes the allocation Pareto optimal and therefore the action profile $a_1^*=(c_2,c_3,c_1)$ is Pareto optimal. The action profile $a_2^*=(c_3, c_1,c_2)$ is also Pareto optimal. With action profile $a_1^*$, the throughputs of the D2D players are $r_{d_1}= 0.4076$, $r_{d_2}=0.4076$ and $r_{d_3}=0.4089$ bps. Thus, the optimal utility profile $r(n)^*$ is $(0.4076, 0.4076, 0.4089)$. Therefore, the maximum value of the social utility is the sum of these throughputs which is $1.2241$ bps. \\
\emph{4) Simulation:}
We simulate Algorithm 1 with the above topology and parameter setting. We set $\epsilon=0.5$ and $k=11$.  Let the mood of a D2D player $m_d=1$ when it is content and $m_d=0$ when it is discontent. Normalized mood is the sum f the moods of all the players divided by a factor which we have set to $15$. We then obtain a plot of the social utility (left y axis) and the normalized mood (right y axis) of the D2D players across the frames as shown in Fig. 4. We observe from Fig. 4 that the maximum value of the social utility of the D2D players is $1.224$ bps. However, since our algorithm is suboptimal the number of frames over which the social utility is maximum is less. If the algorithm had been optimal,  the system would have spent most of its time in its optimal states \cite{t}.
	\begin{figure}[t]
 	\centering
 	\scalebox{0.6}{\includegraphics{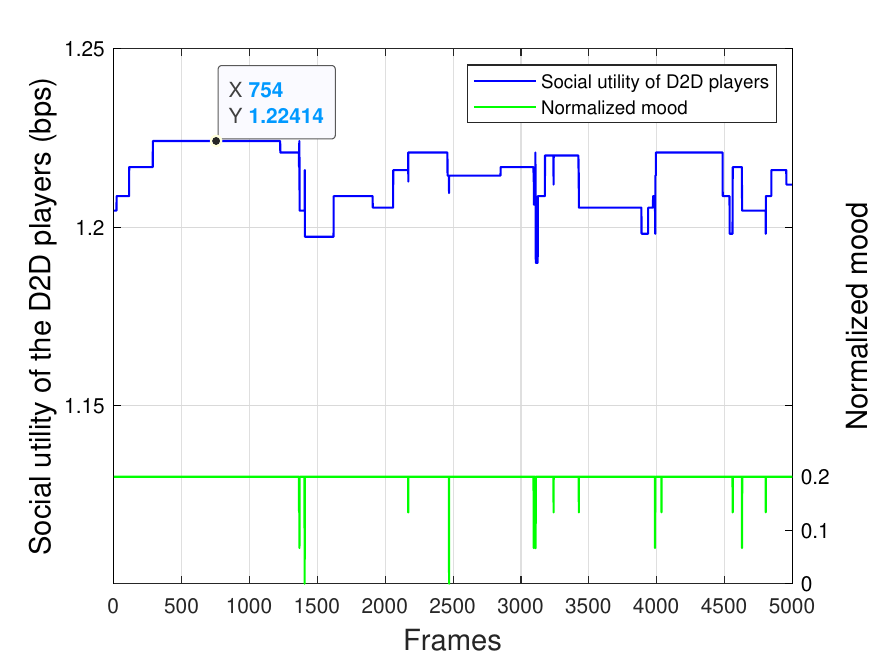}}
 	\caption{Social utility of the D2D players over frames for the concept illustration example.}\label{fig:2}
	\end{figure}

\section{Shadowing, Fast Fading and Mobility of CUs}
We now consider shadowing, fast fading and mobility of CUs. When the channel is a fast fading channel and the CUs are moving, it is  a challenge to devise efficient resource allocation algorithms  for the D2D users. The mobility of a CU is characterized by its velocity and direction of movement. When a CU is moving, the distance of a CU to a D2D player's receiver changes with time.  Because of this and a fast fading channel, a D2D player's SINR and consequently its throughput varies over subframes when it is allocated a CU's RB. Algorithm 1 when applied to this case, fails to maintain any state over a certain period of time and fluctuates continuously. We therefore modify Algorithm 1  to propose another suboptimal algorithm, Algorithm 3.  We devise a novel utility calculation method which is threshold based. The main concept is to detect a significant change in the network topology and the channel due to which the resource allocation to the D2D players should change. 

\begin{algorithm}[t]
	 \caption{Threshold Based Resource Allocation Algorithm}\label{alg:2}
	 \begin{algorithmic}[1]
		 \State Initialize: $\bm{n}_d \gets [0 \ldots 0]$, $\bm{u}_d \gets [0 \ldots 0]$, $\bm{r}_a\gets [0 \ldots 0]$,  $k\gets1$
		 \ForAll{$\tilde{n} =1,2,\ldots$}
		 \If{$\tilde{n}\: mod \: (10N_D)==1$}
		 \State D2D player $d$ selects a list $l_d(\hat{n})$ at the start of a superframe $\hat{n}$.
		 \EndIf
		 \State $\bm{a}\gets$ \textsc{AllocBS}$(l(\hat{n}),D2DP\textunderscore seq)$ \Comment{\textsc{AllocBS()}: Resource allocation function at the BS.}
		 \State Left shift $D2DP\textunderscore seq$ by 1   
		 \State $c\gets\bm{a}(d)$ \Comment{BS transmits that D2D player $d$ is allocated CU $c$'s RB.}
		 \State D2D player $d$ transmits using CU $c$'s RB, observes throughput $R_d(\tilde{n})$ and normalizes it to $r'_d(\tilde{n})$.
		 \State $\bm{a}_d(k) \gets c$
		 \If{$\bm{a}_d(k) \neq 0$}
		 \State $\bm{n}_d(c)\gets \bm{n}_d(c)+1$
		 \State $\bm{r}_a(c) \gets \bigg(1-\frac{1}{\bm{n}_d(c)}\bigg)\bm{r}_a(c)+\frac{1}{\bm{n}_d(c)}r'_d(\tilde{n})$
		 \EndIf 
		 \State $k\gets k+1$
		 \If{$k==N_D+1$}
		 \State $k\gets1$
		 \EndIf
		 \If{$\tilde{n}\: mod \: (10N_D)==0$}
		  \ForAll{$c=1 \: to \: N_C$}
		  \If{$\bm{u}_d(c)==0$}       
		  \State $\bm{u}_d(c) \gets \bm{r}_a(c)$       
		  \Else       
		  \If{$abs(\bm{u}_d(c)-\bm{r}_a(c))>\Delta$ }
		  \State $\bm{u}_d(c) \gets \bm{r}_a(c)$      
		  \EndIf 	   	
		  \EndIf
		  \EndFor
		  \State Take the non-zero elements of $\bm{a}_d$ and put it in a vector  $\bm{a}'_d$.
		  \State $r_d(\hat{n}) \gets sum(\bm{u}_d(\bm{a}'_d))/N_D$
		  \State D2D player  $d$ calculates its mood $m_d(\hat{n})$.
		  \EndIf
		  \EndFor
 	\end{algorithmic}
 \end{algorithm}

\subsection{Threshold Based Resource Allocation Algorithm}
\indent We modify Algorithm 1 by repeating a frame ten times to form a superframe which is numbered by $\hat{n}$. Index $\hat{n}$ starts from one. The list selection by a D2D player occurs at the beginning of each superframe, the resource allocation algorithm at the BS remains the same, the utility calculation method which will now differ and the mood calculation occur at the end of a superframe. In Algorithms 1 and 2, the index $n$ for a frame will be replaced with $\hat{n}$ for a superframe. Next,  we explain the utility calculation method (refer Algorithm 3). \\
\indent In any subframe $\tilde{n}$, two pieces of information are available to a D2D player $d$: 1) the $c^{th}$ CU whose RB is allocated to it and 2) the D2D player $d$'s  throughput $R_d(\tilde{n})$ when it transmits using this CU's RB which the D2D player $d$ normalizes. In a continuum of subframes, a CU $c$'s RB can be allocated to the D2D player $d$ several times. Every time the D2D player $d$ is allocated this CU's RB, it observes a change in its throughput as a result of the CU's mobility and the channel. In the first frame of length $N_D$ of a superframe, a D2D player is allocated the RBs of different CUs. It stores this sequence of CUs in a vector $\bm{a}_d$ of length $N_D$. This sequence is again repeated in the next frames of the superframe. Every time it observes that a CU $c$'s RB is allocated to it, it calculates a running average of its previous normalized throughputs observed with this CU over subframes and the present normalized throughput using Monte Carlo (MC) averaging and stores it in a vector $\bm{r}_a$ of length $N_C$ at its $c^{th}$ index. It also stores the number of times that it is allocated this CU $c$'s RB in a vector $\bm{n}_d$ of length $N_C$ at its $c^{th}$ index. This is required for MC averaging. This averaging is also done for all the other CUs whose RBs are allocated to it. When this CU $c$'s RB is allocated for the first time to it, then its throughput is stored in $\bm{r}_a$ at its $c^{th}$ index. It also maintains a vector $\bm{u}_d$ of length $N_C$ which we term as the utility vector. At the end of every superframe, it checks the entries of $\bm{u}_d$ and in whichever indices of $\bm{u}_d$ it finds a zero, it stores the contents of $\bm{r}_a$ from these indices to $\bm{u}_d$. If it finds non-zero entries in some indices of $\bm{u}_d$, then it calculates an absolute of the difference of these entries with those of $\bm{r}_a$. If any of the resultants exceed a threshold $\Delta$, then those indices of $\bm{u}_d$ are mapped to the indices of $\bm{r}_a$ and $\bm{u}_d$ is updated with the corresponding values of $\bm{r}_a$ at these indices. Thus, the decision to update the entries of the utility vector $\bm{u}_d$ with those of $\bm{r}_a$ depends on $\Delta$. \\
\indent At the end of a superframe $\hat{n}$, a D2D player retrieves the contents of $\bm{u}_d$ from those indices which are in turn the non-zero elements of $\bm{a}_d$. It then sums up these contents of $\bm{u}_d$ and divides it by $N_D$ to obtain its utility $r_d(\hat{n})$. Its mood  $m_d(\hat{n})$ is then determined  as per the mood calculation method of Algorithm 1 from its utility, the list selected by it in the present superframe and its previous state.

\section{Results}
In this section, we evaluate the performance of our proposed algorithms. We consider a circular cell model whose radius is $250$ m in which the CUs and the D2D transmitters are uniformly distributed.  A D2D receiver is uniformly distributed around a D2D transmitter within a circle of radius $50$ m which is the range of D2D communications. The transmit power of a CU is $P_C=250$ mW and that of a D2D player is $P_D=1$ mW. The bandwidth $B$ is $180$ kHz. We assume that in each subframe a CU is allocated one RB. We consider the LTE path loss model, $PL=128.1+37.6 \, log\ (D)$. The UE and the BS's noise figures are $9$ dB and $5$ dB respectively. The thermal noise density is $-174$ dBm. It should be noted that the total number of lists $L$ from which a D2D player can select a list depends on the the length of the lists $K$ because $L=\perm{N_C}{K}$. If $K$ is large, then $L$ increases and so the number of bits required to transmit a list increases. Therefore, we set $K$ to $3$.

\subsection{With Path Loss}
The number of CUs $N_C$ and the number of D2D pairs $N_D$ are set to $5$. Let $\gamma^{tgt}$ be $10$ dB and the normalization factor $\alpha$ be $10$ Mbps. The minimum required utility $r_d^{tgt}$ for each D2D player is set to $10$ kbps, normalized by the normalization factor $\alpha$. When $N_C$, $N_D$, $\gamma^{tgt}$ and  $r_d^{tgt}$ are increased more than this, we have observed in the simulations that the plot of the social utility of the D2D players over frames fluctuates continuously. 
Fig. 5 demonstrates the performance of Algorithm 1 with path loss. We plot the social utility of the D2D players over frames with three different values of $\epsilon$ and $k$. When choosing these design parameters we must note that the exploration probability $\epsilon^k/(L-1)$ depends on them. If it is very high then the D2D players explore more often due to which the duration for which the system states are maintained becomes very small. If it is very low then the D2D players explore so less that the transition from one system state to another occurs infrequently. We set $\epsilon=0.68$ and $k=21$ and then increase $\epsilon$ to $0.69$ which results in the exploration probability to increase.  Thus, the D2D players explore lists more often. Due to a change in their actions their individual utilities also change more often which is reflected in the fluctuations observed in the social utility of the D2D players.

 	\begin{figure}[t]
 	\centering
 	\scalebox{0.6}{\includegraphics{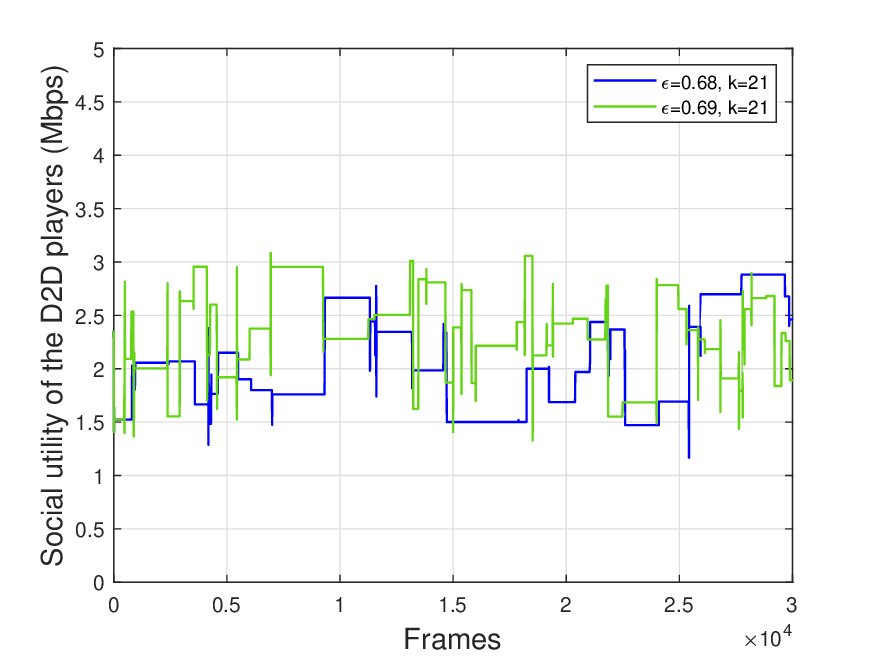}}
 	\caption{Social utility of the D2D players over frames with Algorithm 1.}\label{fig:2}
 	\end{figure}
 	
\subsection{With Shadowing, Fast Fading and Mobility of CUs}
We increase  the number of D2D pairs $N_D$ and the number of CUs $N_C$ to $20$, $\gamma^{tgt}$ to $15$ dB and the minimum required utility $r_d^{tgt}$ for each D2D player to $200$ kbps, normalized by the normalization factor $\alpha$.  The normalization factor $\alpha$ is set to $100$ Mbps. When $N_D$,  $\gamma^{tgt}$ and $r_d^{tgt}$  are increased more than this, we observe in this case too that the social utility of the D2D players fluctuates continuously.\\
\emph{1) Shadow Fading Model:}  We model the shadow fading random variables (RVs) corresponding to the propagation paths with lognormal RVs of standard deviation $8$ dB. We assume these to be constant over time except: 1) the shadow fading RV corresponding to the path between a CU and the BS and 2) the shadow fading RV corresponding to the path between a CU and the D2D receiver which change because of CU mobility as follows. According to the exponential autocorrelation model of shadowing \cite{y}, the shadowing samples are correlated and do not change much over a certain range of distances. Since the CUs are moving, it is difficult to model the shadowing process and its effect on the network. Therefore, we assume that the shadowing RVs remain constant over the decorrelation distance and change after a CU has moved a distance greater than the decorrelation distance \cite{z}. 
	\begin{figure}[t]
 	\centering
 	\scalebox{0.6}{\includegraphics{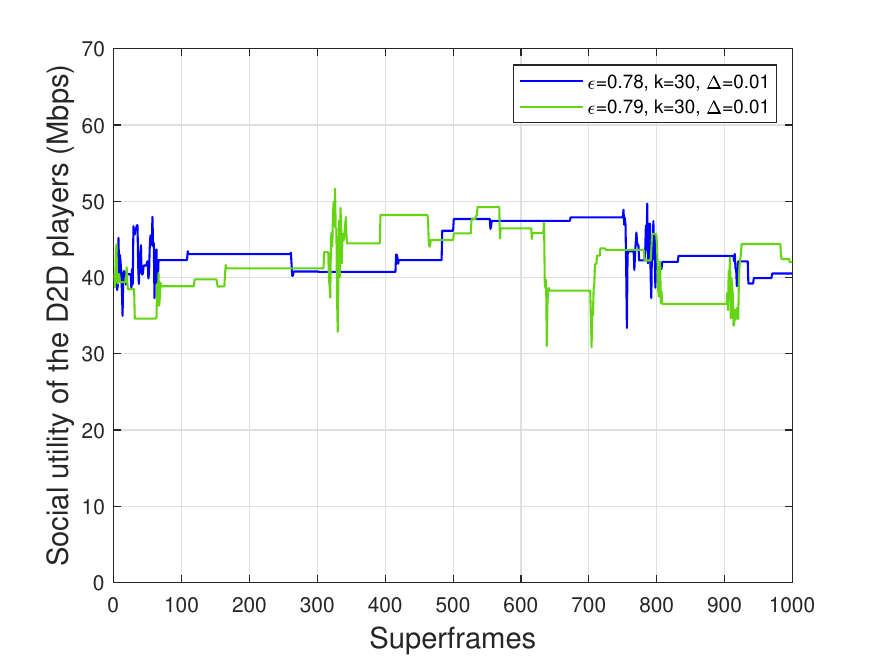}}
 	\caption{Social utility of the D2D players over superframes with Algorithm 3.}
 	\end{figure}
Thus, the shadowing RVs change every decorrrelation distance. The decorrelation distance in the urban macro cell scenario is $50$ m \cite{y}. For a speed of $0.5$ m/s, the shadowing RVs do not change for $100$ s. Since a subframe is of duration $1$ ms, the number of subframes in which the shadowing RVs do not change is $10^5$. \\
\emph{2) Fast Fading Channel Model:} We consider a  Rayleigh fast fading channel  in which the channel gains are independent and exponentially distributed RVs with mean $1$. These channel gains change in every subframe.\\
\emph{3) Mobility Model of CUs:} We employ the following model to characterize the mobility of the CUs. The CUs start from their initial positions and choose directions that are uniformly distributed between $0$ and $360$ degrees. The speeds of all the CUs are constant. The number of time slots (subframes) $T$ for which it travels till it changes its direction is a geometrically distributed RV with parameter $p$, where $p$ is the probability of success in each trial. The mean of the geometric RV is therefore $(1- p)/p$. We set $p=10^{-5}$. The CU after $T$ subframes does not pause unlike the random direction model \cite{z1}, instead it again chooses a direction that is randomly uniformly distributed. Its speed remains the same. When it hits the boundary of the circular cell, it chooses a random direction. The range of angles that it can choose from is limited because it should not cross the boundary. If the time when it hits the boundary is a fraction in milliseconds then the decimal part needs to be truncated because the subframes are of 1 ms duration. Thus, the resulting distance is within the boundary. This model differs from the random direction model  because the pause times are zero and the speed is constant instead of a uniformly distributed RV. We assume a constant pedestrian speed of $0.5$ m/s for all the CUs.
\\
\indent  Let $\Delta= 0.01$. With Algorithm 3, we observe the following results. Fig. 6 shows a typical sample path realization of the social utility of the D2D players over superframes with $\epsilon$ and $k$ equal to $0.78$ and $30$ respectively. We increase $\epsilon$ to $0.79$ and observe that the D2D players explore more. From Fig. 6 we notice that the average social utility of the D2D players obtained with Algorithm 3 is about $45$ Mbps. Moreover,  the social utility of the D2D players  remains above $30$ Mbps over the superframes which means that the algorithm is robust even with shadowing, fast fading and mobility of CUs. We also note that this algorithm can be deployed for local data transmissions in D2D communications underlaying an LTE network because each D2D user gets an average social utility greater than 1.5 Mbps. \\

\section{Conclusions}
We have proposed two semi-distributed suboptimal D2D resource allocation algorithms in a game theoretic framework without inter-D2D interference. Since the algorithms are semi-distributed, DCSI-R doesn't need to be transmitted to the BS which reduces the control overhead and power wastage of the network. We explain the notions of stochastically stable states and Pareto optimality. Our first  proposed algorithm does not perform well because the game's interdependence cannot be ascertained in some frames which shows the limitations of the proposed design. We modify it and propose another suboptimal algorithm that shows sturdy performance to channel randomness and mobility of CUs. It can be deployed for local data transmissions in D2D communications underlaying an LTE network. As very less research has been done in the field of designing Pareto optimal D2D resource allocation algorithms, our work will help advance this field. 

\section*{Acknowledgment}
The author would like to thank Prof.  Abhay Karandikar and Prof. Prasanna Chaporkar for their invaluable advise and suggestions while doing this research work.

\addcontentsline{toc}{chapter}{Bibliography}


\begin{thebibliography}{12}
\bibitem{a0}N. H. Hussein, C. T. Yaw, S. P. Koh, S. K. Tiong and K. H. Chong, ``A Comprehensive Survey on Vehicular Networking: Communications, Applications, Challenges, and Upcoming Research Directions," \emph{IEEE Access}, vol. 10, pp. 86127-86180, Aug. 2022.

\bibitem{a1}5GAA, ``C-V2X in action - 5GAA", https://5gaa.org/c-v2x in action/, accessed: 2023-05-25. 

\bibitem{a2}3GPP TR 26.985 V16.0.0, ``5G; Vehicle-to-everything (V2X); Media handling and interaction", 2010. 

\bibitem{a3}R. Bassoli, F. H. P. Fitzek and E. C. Strinati, ``Why do we need 6G?,'' \emph{ITU J-FET}, vol. 2, no. 6, pp. 1-31, Sept. 2021.  

\bibitem{a4}M. S. M. Gismalla, A. I. Azmi, M. R. B. Salim, M. F. L. Abdullah, F. Iqbal, W.A. Mabrouk et al., ``Survey on Device to Device (D2D) Communication for 5GB/6G Networks: Concept, Applications, Challenges, and Future Directions," \emph{IEEE Access}, vol. 10, pp. 30792-30821, Mar. 2011. 

\bibitem{a}J. Wang, D. Zhu, C. Zhao, J. C. F. Li and M. Lei, ``Resource Sharing of Underlaying Device-to-Device and Uplink Cellular Communications,'' \emph{IEEE Commun. Lett.}, vol. 17, no. 6, pp. 1148-1151, June 2013.

\bibitem{b}D. Feng, L. Lu, Y. Y. Wu, G. Y. Li, G. Feng and S. Li, ``Device-to-Device Communications Underlaying Cellular Networks,'' \emph{IEEE Trans. Commun.}, vol. 61, no. 8, pp. 3541-3551, Aug. 2013.

\bibitem{c}W. Zhao and S. Wang, ``Resource Allocation for Device-to-Device Communication Underlaying Cellular Networks: An Alternating Optimization Method,'' \emph{IEEE Commun. Lett.}, vol. 19, no. 8, pp. 1398-1401, Aug. 2015.

\bibitem{d}S. Sesia, I. Toufik and M. Baker, ``\emph{LTE-The UMTS Long Term Evolution: From Theory to Practice},'' 2nd ed., New York, NY, USA: Wiley, 2011.

\bibitem{e}J. R. Marden, H. P. Young and L. Y. Pao, ``Achieving Pareto Optimality through Distributed Learning,'' in \emph{Proc.} CDC, pp. 7419-7424, Dec. 2012.  

\bibitem{f}A. Menon and J. S. Baras, ``A Distributed Learning Algorithm with Bit-valued Communications for Multi-agent Welfare Optimization,'' in \emph{Proc.} CDC, pp. 2406-2411, Dec. 2013. 

\bibitem{g1}P. Chaporkar, A. Proutiere and B. Radunoviae, ``Rate Adaptation Games in Wireless LANs: Nash Equilibrium and Price of Anarchy,'' in  \emph{Proc.} INFOCOM, pp. 1-9, Mar. 2010. 

\bibitem{g}T. Hatanaka, Y. Wasa, R. Funada, A.G. Charalambides and M. Fujita, ``A Payoff-Based Learning Approach to Cooperative Environmental Monitoring for PTZ Visual Sensor Networks'' \emph{IEEE Trans. Automatic Control}, vol. 61, no. 3, pp. 709-724, Mar. 2016.

\bibitem{h}H. P. Borowski, J. R. Marden and J. S. Shamma, ``Learning efficient correlated equilibria,''in \emph{Proc.} CDC, pp. 6836-6841, Dec. 2014.

\bibitem{i}J. R. Marden, ``Selecting Efficient Correlated Equilibria Through Distributed Learning,'' \emph{Games Econ. Behav.}, vol. 106, issue C, pp. 114-133, Nov. 2017.

\bibitem{j}F. Wang, L. Song, Z. Han, Q. Zhao and X. Wang, ``Joint Scheduling and Resource Allocation for Device-to-Device Underlay Communication,'' in \emph{Proc.} WCNC, pp. 134-139, Apr. 2013.

\bibitem{k}X. Chen, R. Q. Hu and Y. Qian, ``Distributed Resource and Power Allocation for Device-to-Device Communications Underlaying Cellular Network,'' in \emph{Proc.} GLOBECOM, pp. 4947-4952, Dec. 2014.

\bibitem{l}R. Yin, C. Zhong, G. Yu, Z. Zhang, K. K. Wong and X. Chen, ``Joint Spectrum and Power Allocation for D2D Communications Underlaying Cellular Networks,'' \emph{IEEE Trans. Veh. Tech.}, vol. 65, no. 4, pp. 2182-2195, Apr. 2016.

\bibitem{m}C. X. L. Song, Z. Han, Q. Zhao, X. Wang, X. Cheng and B. Jiao, ``Efficiency Resource Allocation for Device-to-Device Underlay Communication Systems: A Reverse Iterative Combinatorial Auction Based Approach,'' \emph{IEEE J. Sel. Areas. Commun.}, vol. 31, no. 9, pp. 348-358, Sept. 2013.

\bibitem{n}S. Maghsudi and S. Stanczak, ``Hybrid Centralized-Distributed Resource Allocation for Device-to-Device Communication Underlaying Cellular Networks,'' \emph{IEEE Trans. Veh. Tech.}, vol. 65, no. 4, pp. 2481-2495, Apr. 2016.

\bibitem{o}S. Lindner, R. Elsner, P. N. Tran and A. Timm-Giel, ``A Two-Game Algorithm for Device-to-Device Resource Allocation with Frequency Reuse,'' in \emph{Proc.} VTC2019-Fall, pp. 1-5, Sept. 2019.

\bibitem{p}Y. Sun, F. Wang and Z. Liu, ``Coalition Formation Game for Resource Allocation in D2D Uplink Underlaying Cellular Networks,' \emph{IEEE Commun. Lett.}, vol. 23, no. 5, pp. 888-891, May 2019.

\bibitem{q}C. Sun, B. Zhang, W. Zheng and J. Shu, ``Price Learning in Joint Resource Allocation and Power Control Game in Heterogeneous Cellular Networks,'' in \emph{Proc.} ISPA/BDCloud/SocialCom/SustainCom, pp. 1354-1358, Dec. 2020.

\bibitem{r}D. Shi, L. Li, T. Ohtsuki, M. Pan, Z. Han and H. V. Poor, ``Make Smart Decisions Faster: Deciding D2D Resource Allocation via Stackelberg Game Guided Multi-Agent Deep Reinforcement Learning,'' ," in \emph{IEEE Trans. Mobile Comput.}, vol. 21, no. 12, pp. 4426-4438, Dec. 2022.

\bibitem{s}Y. Yuan, Z. Li, Z. Liu, Y. Yang and X. Guan, ``Double Deep Q-Network Based Distributed Resource
Matching Algorithm for D2D Communication,'' \emph{IEEE Trans. Veh. Tech.}, vol. 71, no. 1, pp. 984-993, Jan. 2022.

\bibitem{t}M. Singh and P. Chaporkar, ``An Efficient and Decentralised User Association Scheme for Multiple Technology Networks,'' in \emph{Proc.} WiOpt, pp. 460-467, May 2013.

\bibitem{u}J. R. Marden, S. D. Ruben and L. Y. Pao, ``A Model-Free Approach to Wind Farm Control Using Game Theoretic Methods,'' \emph{IEEE Trans. Control Sys. Tech.} vol. 21, no. 4, pp. 1207-1214, July 2013.

\bibitem{v}H. Dai, Y. Huang, R. Zhao, J. Wang and L. Yang, ``Resource Optimization for Device-to-Device and Small Cell Uplink Communications Underlaying Cellular Networks,'' \emph{IEEE Trans. Veh. Tech.}, vol. 67, no. 2, pp. 1187-1201, Feb. 2018.

\bibitem{w}S. Dominic and L. Jacob, ``Utility-Based Resource Allocation for Underlay D2D Networks,''in \emph{Proc.} TENSYMP, pp. 1-5, July 2017.

\bibitem{x}I. Mondal, A. Neogi, P. Chaporkar and A. Karandikar, ``Bipartite Graph Based Proportional Fair Resource Allocation for D2D Communication,''  in \emph{Proc.} WCNC, pp. 1-6, Mar. 2017.
  

\bibitem{y}``Guidelines for evaluation of radio interface technologies for IMT-Advanced,'' \textit{Rep. ITU-R M.2135}, 2008.

\bibitem{z}T. Abbas, K. Sj\"{o}berg, J. Karedal and F. Tufvesson, ``A Measurement Based Shadow Fading Model for Vehicle-to-Vehicle Network Simulations,'' \textit{Int. J. Antennas Propag.}, vol. 2015, Article ID 190607, June 2015. 

\bibitem{z1}Y-B. Ko and N. H. Vaidya, ``Location Aided Routing (LAR) in mobile ad hoc networks,'' \textit{Wireless Networks}, vol. 6, no. 4, pp. 307–321, Sept. 2000.
\end{thebibliography}
\end{document}